\documentclass[a4paper]{amsart}

\usepackage{amsmath}
\usepackage{amssymb}
\usepackage{amsfonts}
\usepackage{amsthm}

\newtheorem{lemma}{Lemma}[section]
\newtheorem{theorem}{Theorem}

\theoremstyle{definition}

\theoremstyle{definition}
\newtheorem{remark}[lemma]{Remark}
\theoremstyle{definition}

{\catcode `\@=11 \global\let\AddToReset=\@addtoreset}
\AddToReset{equation}{section}

\newcommand{\D}{\mathrm{d}}

\newcommand{\R}{\mathbb{R}}
\newcommand{\N}{\mathbb{N}}

\newcommand{\e}{{\varepsilon }}

\newcommand{\norm}[1]{ \left| \! \left| #1 \right| \! \right| }

\def\tr{\mathop{\rm tr}\nolimits} 





\begin{document}

\title[Classical limit for semi-relativistic Hartree systems]
{Classical limit for semi-relativistic Hartree systems}

\author[G. L. Aki] {Gonca L. Aki}

\author[P. A. Markowich]{Peter A. Markowich}

\author[C. Sparber]{Christof Sparber}

\address[G. L. Aki] {Faculty of Mathematics, University of Vienna, Nordbergstra\ss e 15, 
A-1090 Vienna, Austria}\email{gonca.aki@univie.ac.at}

\address[P. A. Markowich]
{Department of Applied Mathematics and Theoretical Physics,
University of Cambridge, Wilberforce Road, Cambridge CB3 0WA \&
Faculty of Mathematics, University of Vienna, Nordbergstra\ss e
15, A-1090 Vienna, Austria} \email{peter.markowich@univie.ac.at}

\address[C. Sparber]
{Department of Applied Mathematics and Theoretical Physics,
University of Cambridge, Wilberforce Road, Cambridge CB3 0WA}
\email{c.sparber@damtp.cam.ac.uk}

\subjclass[2000]{35Q40, 35Q55, 81S30, 81V17}
\keywords{semi-relativistic Hartree energy, classical limit, Wigner transform, relativistic Vlasov-Poisson system}

\begin{abstract}
We consider the three-dimensional semi-relativistic Hartree model for fast quantum mechanical particles moving in a self-consistent field. 
Under appropriate assumptions on the initial density matrix as a (fully) mixed quantum state we 
prove, using Wigner transformation techniques, that its classical limit yields the well known relativistic 
Vlasov-Poisson system. The result holds for the case of attractive and repulsive mean-field interaction, 
with an additional size constraint in the attractive case.
\end{abstract}

\thanks{This work has been supported by the KAUST Investigator Award and the 
Wolfson Research Merit Award (Royal Society) of P. Markowich. 
G. L. Aki acknowledges support by the DEASE project of the EU. 
C.~Sparber has been supported by the APART grant of 
the Austrian Academy of Sciences.}

\maketitle
\section{Introduction and main result}\label{sint}

In this paper we aim to establish the classical limit as $\e \to 0_+$ of the \emph{semi-relativistic Hartree system} (or Schr\"odinger-Poisson system), i.e.
\begin{equation}\label{equationrho}
\left \{
\begin{split}
i \varepsilon \partial_t \rho^\varepsilon(t) =& \, [H^\e, \rho^\varepsilon(t)],\quad x \in \R^3, \, t\in \R_+,\\
- \kappa \Delta  V^\e = & \, n^\varepsilon (t,x), \quad \kappa = \pm 1, 
\end{split}
\right.
\end{equation}
subject to an initial data $\rho^\e(0) \equiv \rho^\e_{0}$, 
where the reason for choosing $\kappa = \pm 1$ respectively will be explained below and 
the Hamiltonian operator $H^\e$ is given by 
\begin{equation*}
H^\e:=\sqrt{-\varepsilon^2\Delta+1}+  V^\varepsilon(t,x).
\end{equation*}
Here the pseudo-differential operator for the kinetic energy is simply defined via 
multiplication in Fourier space with the symbol $\sqrt{ |\e \xi|^2+1}$, 
for the semi-classical parameter $0<\e \ll 1$, a dimensionless scaled Planck's constant (all other physical constants are rescaled to be equal to $1$). 
This scalar pseudo-differential operator is frequently used in relativistic quantum mechanical models as a convenient replacement 
of the full (matrix-valued) Dirac operator.
In \eqref{equationrho} we denote by $\rho^\e(t)\in \mathfrak S_1(L^2(\R^3))$ the density matrix operator of the system, i.e. a positive self-adjoint
trace class operator acting on $L^2(\R^3)$. The particle density $n^\varepsilon(t,\cdot) \in L^1(\R^3)$ is then obtained by evaluating the 
corresponding kernel $\rho^\e(t,x,y)$, which, by abuse of notation is denoted by the same symbol as the density operator, 
on its diagonal \cite{Si}, i.e. $n^\varepsilon(t,x) = \rho^\e(t,x,x)$. 
The system \eqref{equationrho} describes the \emph{mean field dynamics} of (relativistic) quantum mechanical particles 
in a \emph{mixed state}. Denoting by $\{\psi_j^\varepsilon\}_{j\in \mathbb N}$ a complete orthonormal basis 
of $L^2(\R^3)$ we can decompose the kernel of $\rho^\e(t)\in \mathfrak S_1(L^2(\R^3))$ via 
\begin{equation*}
\rho^\varepsilon(t,x,y)=\sum_{j\in \N} \lambda_j^\varepsilon \, \psi_j^\varepsilon(t,x)\overline{{\psi}_j^\varepsilon}(t,y),
\end{equation*}
with (constant in time) $\lambda_j \in \ell^1$, $\lambda_j \geq 0$ \cite{Si}. 
Using this representation, we arrive at an \emph{equivalent} 
system of countable many nonlinear Schr\"odinger-type equations 
\begin{equation}\label{RSE}
\left \{
\begin{split}
i \varepsilon \partial_t\psi_j^\varepsilon= & \,
{\sqrt{-\varepsilon^2\Delta+1}}\ \psi_j^\varepsilon+V^\varepsilon(t,x)\psi_j^\varepsilon\\
-\kappa \Delta V^\e = & \, n^\varepsilon(t,x), 
\end{split}
\right.
\end{equation}
where the density is now given by
\begin{equation*}
n^\varepsilon(t,x)= \sum_{j \in \N} \lambda_j^\varepsilon|\psi_j^\varepsilon(t,x)|^2.
\end{equation*}
The system \eqref{RSE} can be interpreted as the time-dependent model associated 
to the semi-relativistic \emph{Hartree energy}
\begin{equation*}
\begin{split}
\mathcal E^\e_{\rm H}(t)= & \ \mathcal E^\e_{\rm kin}(t) + \mathcal E^\e_{\rm pot}(t)\\
= & \ \sum_{j\in \N} \lambda_j^\e {\|(-\e^2 \Delta +1)^{1/4} \psi_j^\e  \|}^2_{L^2(\R^3)}+ 
\frac{\kappa}{8 \pi }\iint_{\mathbb R^6} \frac{n^\varepsilon(t,x) n^\varepsilon(t,y)}{|x-y|} \, \D x \, \D y,
\end{split}
\end{equation*}
where for the second line we have used the three dimensional Green's function representation of the potential, i.e. 
\begin{equation*}
V^\varepsilon(t,x)=  \frac{\kappa}{4\pi |x|}*n^\varepsilon(t,x).
\end{equation*}
For future references we recall that the kinetic and the self-consistent potential-energy can also be 
shortly written in terms of density matrices as
\begin{equation*}
\mathcal E^\e_{\rm kin}(t) = \tr (\sqrt{-\varepsilon^2\Delta+1}\ \rho^\e(t) ), \quad 
\mathcal E^\e_{\rm pot}(t) = \frac{\kappa}{2}\tr (V^\e(t,x) \rho^\e(t) ).
\end{equation*}
In the case $\kappa = -1$, the coupling to the Poisson equation comprises an \emph{attractive nonlinearity} 
for the Schr\"odinger-type equations \eqref{RSE} and hence, global well-posedness for general initial data does not hold, cf. \cite{FrLen,Len}. 
In this case, the system \eqref{RSE} is a generalization (for mixed states) of the 
semi-relativistic Hartree model derived in \cite{ElSch} as the mean field limit for large systems of 
Bosons with gravitational self-interaction. This model has been extensively 
studied in recent years as it is considered to describe the dynamics of so-called \emph{Boson stars} 
\cite{FrJoLen, FrLen, FrLen1, Len}. In the case $\kappa = 1$, the Poisson interaction is \emph{repulsive} 
and typically models electron-electron self-interactions (in the Hartree-approximation). The system \eqref{RSE} therefore 
allows to study relativistic corrections to the usual Hartree model of many-body electron system as 
it is needed for example in the case of heavy atoms, cf. \cite{Lie}. 

In the present work we are interested in rigorously establishing the classical limit of \eqref{RSE} as $\e \to 0$. 
To this end we shall heavily rely on the Wigner transformed picture of quantum mechanics \cite{Wi}. 
To this end, we define the ($\e$-scaled) \emph{Wigner transform} of a given density matrix kernel $\rho^\varepsilon(t,x,y)$ as in 
\cite{MaMa, LiPa}
\begin{equation*}
f^\varepsilon[\rho^\e(t)] \equiv f^\varepsilon(t,x,\xi):=\frac{1}{(2\pi)^3}\int_{\mathbb{R}^3} \rho^\varepsilon
\left(t,x+\frac{\e y}{2} ,x-\frac{\e y}{2} \right)e^{-i\xi\cdot y} \, \D y,
\end{equation*}
From this definition we easily infer
\begin{equation}\label{scaling}
{\| f^\e(t,\cdot,\cdot) \|}^2_{L^2(\R^3_x\times \R^3_\xi)}= (2\pi\varepsilon)^{-3}{\| \rho^\e(t,\cdot,\cdot) \|}^2_{L^2(\R^3_x\times \R^3_y)}.
\end{equation}
Since $f^\varepsilon(t,x,\xi)$ is real-valued it can be seen as a quantum mechanical analog of 
a classical phase space distribution. However, the Wigner function in general also takes negative values and therefore does not allow 
for a probabilistic interpretation. 
Nevertheless, Wigner functions have proved to be a highly successful tool in the rigorous mathematical 
derivation of classical limits, see, e.g. \cite{Bo,GLM, GeMaMaPo, LiPa, MaMa, Pu, SpMaMa} for various analytical results and \cite{MaPiPo} for a numerical study. 
In particular, under appropriate uniform bounds on $\rho^\e$ (see Theorem 1 below), it is known that $f^\e$ has accumulation 
points $f\equiv f^0(t,x,\xi)$ as $\e \to 0$ (in a suitable topology), which are positive Borel measures on the phase space. 
The limiting measures can then be identified as a distributional solution of the corresponding limiting 
evolutionary system. In our case we expect it to be a solution of the 
\emph{relativistic Vlasov-Poisson system}, i.e.
\begin{equation}\label{VP}
\left \{
\begin{split}
& \, \partial_t f +\frac{\xi}{\sqrt{|\xi|^2+1}}\cdot \nabla_x f-\nabla_x V\cdot\nabla_\xi f= 0,\\
& \, -\kappa \Delta V =  n(t,x), 
\end{split}
\right.
\end{equation}
where the limiting particle density $n(t,x)$ is obtained as
\begin{equation*}
n(t,x)=\int_{\mathbb{R}^3}f(t,x,\xi) \, \D  \xi.
\end{equation*}
The system \eqref{VP} is in itself an intensively studied model, in particular in the gravitational case $\kappa = -1$, 
see e.g. \cite{GlSch, GlSch1, HaRe} and the 
references given therein. The present work thus provides a rigorous connection between 
the quantum mechanical model \eqref{equationrho}, or equivalently\eqref{RSE}, and its classical analog \eqref{VP}. The novelty of the 
work lies in the fact that we have to establish suitable estimates based on the 
relativistic kinetic energy (instead of the usual one $-\frac{\e^2}{2}\Delta$) in order to be able to pass to the limit in the 
nonlinear potential $V^\e(t,x)$. The main result of this paper is as follows.
\begin{theorem}\label{th1}
Let $\e_0 < 1$ and assume that the initial data $\rho^\e_0 \in \mathfrak S_1(L^2(\R^3))$ 
is a density matrix operator, such that 
\begin{align*}\label{conditionA}
\sup_{\e \in (0, \e_0]} \left( \tr \rho^\e_0 +  \frac{1}{\e^3} \tr {(\rho_0^\e)}^2 + \tr \sqrt{-\varepsilon^2\Delta+1}\, \rho^\e_0 \right) < \infty.\tag{A}
\end{align*}
In the gravitational case $\kappa = -1$ we additionally assume
\begin{align*}\label{conditionB}
\frac{1}{\e^3} \tr {(\rho_0^\e)}^2 < \frac{C_*}{\tr \rho^\e_0},\tag{B}
\end{align*}
where $C_*>0$ is a fixed $\e$-independent constant to be computed in the following.
Then there exists a unique mild solution $\rho^\e  \in C([0,\infty);\mathfrak S_1(L^2(\R^3))$ of the relativistic 
Hartree system \eqref{equationrho} and its Wigner transform 
$f^\varepsilon[\rho^\e] \in C([0,\infty);L^2(\R_x^3 \times \R^3_\xi))$ 
converges, up to extraction of sub-sequences, 
\begin{eqnarray*}
f^\varepsilon[\rho^\e]\stackrel{\e\rightarrow 0 }{\longrightarrow} f \quad \text{in $C([0,\tau];\mathcal S'(\R^3_x \times \R^3_\xi) - w*)$,}
\end{eqnarray*}
for any $\tau< \infty$, where $f\in C([0,\tau];\mathcal M^+ (\R_x^3\times \R^3_\xi))\cap 
L^\infty([0,\tau];L^1\cap L^2(\R_x^3\times \R^3_\xi))$ is a distributional 
solution of the relativistic Vlasov-Poisson system \eqref{VP}. Here $\mathcal M^+$ is the space of positive bounded 
Borel measures on phase-space, equipped with the weak-* topology. 
\end{theorem}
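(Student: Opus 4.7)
The plan is to proceed in three main steps: (i) global well-posedness of \eqref{equationrho} together with uniform-in-$\e$ a priori bounds, (ii) weak-* compactness of the Wigner (or Husimi) transforms $f^\e$, and (iii) identification of the limit as a distributional solution of \eqref{VP}.

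For (i), local well-posedness in the energy space of trace-class operators with finite relativistic kinetic energy follows in a standard way: the free flow generated by $\sqrt{-\e^2\Delta+1}$ is unitary on $L^2(\R^3)$, and the nonlinearity $\rho^\e\mapsto[V^\e,\rho^\e]$ is locally Lipschitz on this space thanks to the Kato inequality $|x|^{-1}\le \tfrac{\pi}{2}\sqrt{-\Delta}$, which is particularly natural in the relativistic setting. Global extension rests on the three conservation laws along the flow: $\tr\rho^\e(t)$, $\tr(\rho^\e(t))^2$ and $\mathcal E^\e_{\rm H}(t)$ are preserved. In the attractive case $\kappa=-1$, a Hardy--Littlewood--Sobolev estimate combined with a Lieb--Thirring/Daubechies-type bound adapted to $\sqrt{-\e^2\Delta+1}$ and to mixed states controls $|\mathcal E^\e_{\rm pot}(t)|$ by $K\bigl(\tr\rho_0^\e,\,\e^{-3}\tr(\rho_0^\e)^2\bigr)\,\mathcal E^\e_{\rm kin}(t)$; choosing $C_*$ as the inverse of the relevant universal constant turns (B) into a strictly sub-critical condition, and $\mathcal E^\e_{\rm kin}(t)$ remains bounded uniformly in $t$ and $\e$.

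For (ii), the scaling identity \eqref{scaling} combined with (A) gives a uniform bound on $\|f^\e\|_{L^\infty_t L^2_{x,\xi}}$, while passing to the positive Husimi function $\widetilde f^\e = f^\e * G^\e$ (for a suitable Gaussian $G^\e$) and using the uniform kinetic-energy bound produces a uniform bound on the relativistic first moment $\int \sqrt{1+|\xi|^2}\,\widetilde f^\e\,\D x\,\D\xi$. Interpolation between the conserved $L^1$-norm (the total mass) and this moment gives a uniform $L^\infty_t L^p_x$-bound for $n^\e(t,x)$ with some $p>1$. The Wigner transform of \eqref{equationrho} reads
\begin{equation*}
\partial_t f^\e + \mathcal T^\e f^\e + \Theta^\e[V^\e] f^\e = 0,
\end{equation*}
where $\mathcal T^\e$ is the Fourier multiplier in the $x$-dual variable $\eta$ with symbol $\e^{-1}\bigl(\sqrt{1+|\e\xi+\e\eta/2|^2}-\sqrt{1+|\e\xi-\e\eta/2|^2}\bigr)$ and $\Theta^\e[V^\e]$ is the usual Moyal action of the potential. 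All three terms, tested against $\varphi\in\mathcal S(\R^6)$, are uniformly bounded in $\e$, yielding $t$-equicontinuity of $\langle f^\e(t),\varphi\rangle$ and, by Arzel\`a--Ascoli, convergence in $C([0,\tau];\mathcal S'-w*)$ to some $f\ge 0$, the non-negativity coming from the Husimi picture.

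The decisive step is (iii). By the mean value theorem applied to $\sqrt{1+|\cdot|^2}$, whose gradient is uniformly bounded by $1$, the symbol of $\mathcal T^\e$ converges as $\e\to 0$ to $\eta\cdot\xi/\sqrt{1+|\xi|^2}$ locally uniformly in $(\xi,\eta)$, which via inverse Fourier transform identifies the limit as the relativistic transport $v(\xi)\cdot\nabla_x$ with $v(\xi)=\xi/\sqrt{1+|\xi|^2}$. The main obstacle I anticipate is passing to the limit in $\Theta^\e[V^\e]f^\e$, which requires strong convergence of $\nabla V^\e$ and therefore strong compactness of $n^\e$ in a reasonable topology. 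Time-equicontinuity of $n^\e$ in a negative Sobolev space follows directly from the Wigner equation---here the fact that $v^\e(\xi)=\xi/\sqrt{1+|\e\xi|^2}$ is uniformly bounded by the relativistic speed of light $1$ is crucial, since it precludes the concentration otherwise available in the non-relativistic case---and combined with the $L^\infty_t L^p_x$-bound it yields strong compactness of $n^\e$ in $C([0,\tau];L^1_{\rm loc})$. Elliptic regularity for $-\Delta$ then gives $\nabla V^\e\to\nabla V$ strongly in $L^2_{\rm loc}$. With all ingredients in place, passing to the limit in the weak formulation identifies $f$ as a distributional solution of \eqref{VP}; continuity of $f$ with values in $\mathcal M^+$ and the $L^1\cap L^2$ bound on $f$ follow by standard lower-semicontinuity arguments.
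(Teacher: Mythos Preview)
Your three-step skeleton matches the paper's proof exactly, and the key technical ingredients you invoke in step (i)---the Hardy--Littlewood--Sobolev inequality together with a relativistic Lieb--Thirring/Daubechies bound for mixed states---are precisely what the paper isolates as its Lemma~2.1 (based on the Frank--Lieb--Seiringer fractional Lieb--Thirring inequality). Two points deserve attention, though. First, there are scaling slips in your kinetic symbol: in the paper's Wigner convention the multiplier is $\e^{-1}\bigl(\sqrt{1+|\xi+\e\eta/2|^2}-\sqrt{1+|\xi-\e\eta/2|^2}\bigr)$, not with $\e\xi$, and the bounded object is $\gamma^\e(\eta,\xi)=2\xi/(\sqrt{|\xi+\e\eta/2|^2+1}+\sqrt{|\xi-\e\eta/2|^2+1})$, satisfying $|\gamma^\e|\le 1$; your expression $\xi/\sqrt{1+|\e\xi|^2}$ is unbounded in $\xi$, so the ``speed of light'' argument as written does not go through. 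Second, and more substantively, the Husimi interpolation you propose in (ii) only yields an $L^p$-bound on the smoothed density $\tilde n^\e = G^\e_x * n^\e$, not on $n^\e$ itself, and it is the latter that is needed for the elliptic-regularity/compactness step in (iii). The paper circumvents this by proving the $L^{5/4}$-bound directly at the operator level via the Lieb--Thirring inequality (your step (i) already contains this tool); it then obtains strong $L^2_{\rm loc}$-convergence of $\nabla V^\e$ from $n^\e\in L^{5/4}$ uniformly, hence $\nabla V^\e\in W^{1,5/4}_{\rm loc}$, and compact Sobolev embedding---slightly more direct than your Aubin--Lions route through $n^\e$, though both are valid once the genuine $L^{5/4}$-bound on $n^\e$ is in hand.
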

The above given theorem shows that distributional solutions to \eqref{VP} can indeed be interpreted as the classical 
limit (on any compact time-interval) of solutions to \eqref{RSE}, or equivalently \eqref{equationrho}. 
To get more insight on Assumption \eqref{conditionA} we again use the decomposition 
\begin{equation*}
\rho^\e_0 = \sum_{j\in \N} \lambda_j^\varepsilon \, \psi_j^\varepsilon(x)\overline{{\psi}_j^\varepsilon}(y),
\end{equation*}
to obtain
\begin{equation*}
\tr \rho_0^\e = \sum_{j\in \N} \lambda_j^\varepsilon, \quad 
\tr {(\rho_0^\e)}^2 = \sum_{j\in \N} {(\lambda_j^\varepsilon)}^2.
\end{equation*}
We remark that $\tr \rho_0 ^\e$ is the total charge or mass of the particle system. Thus 
Assumption \eqref{conditionA} implies a uniform (in $\e$) bound on the total 
charge/mass  
and simultaneously requires $\sum_{j\in \mathbb N}(\lambda^\varepsilon_{j})^2 \simeq \e^3$, as $\varepsilon \rightarrow 0$. 
(For a construction of a density matrix $\rho_{0}^\e$, which satisfies 
this requirements for $\psi_j\in H^1(\R^3)$ we refer to \cite{MaMa}.) 
This condition is to be expected from 
earlier papers \cite{LiPa, MaMa} and prevents us from establishing our result in the case of pure 
initial states, i.e. $\rho^\e_0 = \psi_j^\varepsilon(x)\overline{{\psi}_j^\varepsilon}(y)$, 
or even finite combinations of pure states.  
Roughly speaking, the requirement of a totally mixed initial state 
is needed to ensure that the limiting particle density $n(t,x)$ 
is not too singular and thus can 
be successfully convolved with the Poisson kernel $\propto 1/|x|$. This is not clear a-priori as 
the limiting $f(t,x,\xi)$, and thus also the limiting density $n(t,x)$, in general is only a measure. 
In the case of a fully mixed state though we gain a bit more regularity, which is needed in passing to the limit $\e \to 0$. 
Note that for any Hilbert-Schmidt $\rho^\e \in \mathfrak S_2(L^2(\R^3))$, it holds 
$$\norm{\rho^\e}^2_{\mathfrak S_2(L^2(\R^3))}=\tr (\rho^\e)^2  = {\| \rho^\e(\cdot,\cdot) \|}^2_{L^2(\R^3_x\times \R^3_y)}< \infty \, ,$$ cf. \cite{Si}. 
Having in mind the scaling property \eqref{scaling}, Assumption \eqref{conditionA} therefore implies 
for the initial Wigner function 
${\| f^\e(0,\cdot,\cdot) \|}_{L^2} < \infty,$ uniformly in $\e$. This property is 
shown to be conserved by the time-evolution below and thus, yields an important uniform bound on $f^\e(t,x,\xi)$. 
Assumption \eqref{conditionB}, needed in the gravitational case, 
then additionally requires that the $L^2$-norm of the initial Wigner function or the total charge/mass 
is sufficiently small (and not only bounded), 
cf. Remark \ref{remexplain} below.\\

The rest of the paper is now organized as follows: In Section \eqref{s:prelim} we collect some preliminary results needed for the 
proof of our main theorem which is given in Section \ref{s:proof}. We then close this section with some final remarks on possible generalizations.

\section{Preliminary results}\label{s:prelim}

Let us start with a technical lemma, that nevertheless turns out to be crucial for the proof of the main Theorem. 

\begin{lemma} \label{lemma1}
Let $\rho(x, y)$ be the kernel of a positive self-adjoint trace class operator $\rho \in \mathfrak S_1(L^2(\R^3))$ 
and $n(x)\equiv \rho(x, x)$ the corresponding density. Then for $p\in[1,\infty)$ the following estimate holds
\begin{equation}\label{L-T}
{\|n  \|}_{L^{q}(\mathbb R^3)}\leq C_p\, \norm{\rho}_{\mathfrak S_p(L^2(\R^3))}^{\theta}
(\tr |\nabla|\rho )^{1-\theta},
\end{equation}
where
$$
q=\frac{4p-3}{3p-2},\qquad \theta=\frac{p}{4p-3}.
$$
\end{lemma}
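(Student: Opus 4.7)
The plan is to reduce \eqref{L-T} to its two extreme cases and interpolate between them. The endpoint $p=1$ forces $q=1,\theta=1$ and \eqref{L-T} then degenerates to the identity $\norm{n}_{L^1(\R^3)}=\tr\rho=\norm{\rho}_{\mathfrak S_1}$, while the limit $p\to\infty$ corresponds to $q=4/3,\theta=1/4$ and reads
\begin{equation*}
\norm{n}_{L^{4/3}(\R^3)}^{4/3}\;\leq\; C\,\norm{\rho}_{\mathfrak S_\infty}^{1/3}\,\tr(|\nabla|\rho).
\end{equation*}
This second bound is the semi-relativistic Lieb--Thirring (Daubechies) inequality, which I would derive in the standard way: for $0\leq\rho\leq\mathrm{Id}$, the variational principle gives $\tr((|\nabla|-V)\rho)\geq -\sum(E_j)_-$ where $E_j$ are the negative eigenvalues of $|\nabla|-V$, and combining with Daubechies' eigenvalue bound $\sum(E_j)_-\leq C\int V_+^4$ yields $\int V n\leq \tr(|\nabla|\rho)+C\int V^4$ for $V\geq 0$; optimizing in the test function $V=cn^{1/3}$ produces $\int n^{4/3}\leq C\tr(|\nabla|\rho)$, and the homogeneity rescaling $\rho\mapsto\rho/\norm{\rho}_{\mathfrak S_\infty}$ removes the constraint $\rho\leq\mathrm{Id}$ at the cost of the factor $\norm{\rho}_{\mathfrak S_\infty}^{1/3}$.

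To fill in the intermediate $p\in(1,\infty)$ I would split the spectrum of $\rho$ at a level $t>0$, writing $\rho=\rho_{\leq t}+\rho_{>t}$ according to whether the eigenvalue satisfies $\lambda_j\leq t$ or $\lambda_j>t$. Applying the Daubechies inequality to $\rho_{\leq t}$ (which has $\norm{\rho_{\leq t}}_{\mathfrak S_\infty}\leq t$) yields $\norm{n_{\leq t}}_{L^{4/3}}^{4/3}\leq Ct^{1/3}\tr(|\nabla|\rho)$, while the Chebyshev-type estimate $\sum_{\lambda_j>t}\lambda_j\leq t^{1-p}\sum_j\lambda_j^p$ gives $\norm{n_{>t}}_{L^1}\leq t^{1-p}\norm{\rho}_{\mathfrak S_p}^p$. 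Using the union bound $\{n>\tau\}\subset\{n_{\leq t}>\tau/2\}\cup\{n_{>t}>\tau/2\}$ together with Chebyshev's inequality then produces
\begin{equation*}
|\{n>\tau\}|\;\leq\; C\tau^{-4/3}t^{1/3}\tr(|\nabla|\rho)+C\tau^{-1}t^{1-p}\norm{\rho}_{\mathfrak S_p}^p,
\end{equation*}
and the choice $t=t(\tau)=\tau^{1/(3p-2)}(\norm{\rho}_{\mathfrak S_p}^p/\tr(|\nabla|\rho))^{3/(3p-2)}$ balancing the two terms delivers, after elementary algebra, a weak-$L^q$ bound with exactly the exponents $q=(4p-3)/(3p-2)$ and $\theta=p/(4p-3)$ predicted by scaling.

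The main obstacle is upgrading this weak-type estimate to the strong $L^q$-norm required in \eqref{L-T}: a direct integration of $\tau^{q-1}|\{n>\tau\}|$ against the optimized bound yields a logarithmically divergent integral, so a genuine Marcinkiewicz real-interpolation argument using the two separate weak-type endpoint estimates for $n_{\leq t(\tau)}$ and $n_{>t(\tau)}$ with the $\tau$-dependent threshold is needed. The exponents $q=(4p-3)/(3p-2)$ and $\theta=p/(4p-3)$ are then forced by the scaling identity $3(q-1)/q=1-\theta$, reflecting the homogeneity of \eqref{L-T} under the unitary dilation $\rho(x,y)\mapsto\alpha^3\rho(\alpha x,\alpha y)$, which preserves $\norm{\rho}_{\mathfrak S_p}$ while sending $\tr(|\nabla|\rho)\mapsto\alpha\tr(|\nabla|\rho)$ and $\norm{n}_{L^q}\mapsto\alpha^{3(q-1)/q}\norm{n}_{L^q}$.
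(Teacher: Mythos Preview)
Your approach is genuinely different from the paper's. The paper does not interpolate between the endpoints $p=1$ and $p=\infty$; instead, for each fixed $p\in(1,\infty)$ it argues by duality with a test potential. One considers the operator $|\nabla|-\beta n^\alpha$, uses the variational inequality $\tr\bigl((|\nabla|-\beta n^\alpha)\rho\bigr)\ge -\sum_{j,k}|p_{jk}|^2|\mu_k|\lambda_j$ (with $\mu_k<0$ the negative eigenvalues), applies H\"older in the Schatten indices to get $\sum_{j,k}|p_{jk}|^2|\mu_k|\lambda_j\le\norm{\rho}_{\mathfrak S_p}\bigl(\sum_k|\mu_k|^{p'}\bigr)^{1/p'}$, and then invokes the fractional Lieb--Thirring moment bound of Frank--Lieb--Seiringer, $\sum_k|\mu_k|^{p'}\le C\int(\beta n^\alpha)^{p'+3}\,\D x$. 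The choice $\alpha=(p-1)/(3p-2)$ makes $\alpha(p'+3)=\alpha+1=q$, so both sides involve $\int n^q\,\D x$, and optimising in $\beta$ delivers \eqref{L-T} directly in strong form.

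The obstacle you flag is real, and the cure you suggest does not work as stated. In a Marcinkiewicz argument one keeps the two contributions separate in the layer-cake integral and uses Fubini to exploit the $t(\tau)$-dependence of the decomposed norms. This succeeds for the large-eigenvalue piece: with $t(\tau)=c\tau^\gamma$ one finds
\[
\int_0^\infty\tau^{q-2}\sum_{\lambda_j>t(\tau)}\lambda_j\,\D\tau=\frac{c^{-(q-1)/\gamma}}{q-1}\sum_j\lambda_j^{1+(q-1)/\gamma},
\]
and the choice $\gamma=(q-1)/(p-1)=1/(3p-2)$ yields exactly $\norm{\rho}_{\mathfrak S_p}^p$. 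The difficulty is the small-eigenvalue piece. Your Daubechies bound reads $\norm{n_{\le t}}_{L^{4/3}}^{4/3}\le Ct^{1/3}\tr(|\nabla|\rho_{\le t})$, and if you estimate $\tr(|\nabla|\rho_{\le t})\le\tr(|\nabla|\rho)$ the remaining integral $\int_0^\infty\tau^{q-7/3}t(\tau)^{1/3}\,\D\tau$ is a pure power and diverges. If instead you keep $\tr(|\nabla|\rho_{\le t})=\sum_{\lambda_j\le t}\lambda_j\kappa_j$ and swap the integrals, convergence at infinity requires $\gamma<4-3q$; but $4-3q=1/(3p-2)$ as well, so the value of $\gamma$ forced by the other piece lands you exactly on a logarithmic divergence. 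The kinetic energy does not split along the eigenvalue filtration in a way that fits the Marcinkiewicz machine, because the weights $\kappa_j=\langle\psi_j,|\nabla|\psi_j\rangle$ are uncorrelated with the $\lambda_j$. The paper's duality argument with the $p'$-moment Lieb--Thirring inequality bypasses this issue entirely and gives the strong $L^q$ bound in one stroke.
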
 
\begin{proof} For the proof we proceed analogously to \cite{LiPa}. Since the case $p=1$ is immediate we consider only $p>1$.
We start by considering the operator $|\nabla|-\beta n^\alpha$ where $\beta>0$ and $\alpha>0$ are some constants to be chosen later on. 
Let $\mu_1\leq\mu_2\leq\dots$ be the negative eigenvalues of the operator $|\nabla|-\beta n^\alpha$ and let $\{\varphi_k\}_{k \in \N}$ be a finite or countable 
collection of the corresponding ortho-normed eigenvectors. 
We consequently obtain
$$
\tr\ (|\nabla|-\beta n^\alpha )\rho\geq\sum_{k\in \N}\left\langle (|\nabla|-\beta n^\alpha )\rho \, \varphi_k,\varphi_k\right\rangle=
 \sum_{j,k\in \N}|p_{jk}|^2 \mu_k \lambda_j,
$$
where $p_{jk}=\int_{\mathbb R^3}\psi_j\overline{\varphi}_k$. 
Since $\mu_k = - |\mu_k|, \forall \, k \in \N$, we can rearrange this inequality in the following form
\begin{equation*}\label{ineq1}
\tr \beta n^\alpha \rho= \beta\int_{\mathbb R^3}n^{\alpha+1}\D x  \leq  \  \tr|\nabla|\rho + \sum_{j,k\in \N}|p_{jk}|^2 |\mu_k|\lambda_j.
\end{equation*}
Now, using H\"older's inequality, and the fact that $\sum_{j}|p_{jk}|^2=1$, as well as $\sum_{k}|p_{jk}|^2\leq 1$, we obtain
\begin{eqnarray}
\beta\int_{\mathbb R^3}n^{\alpha+1}\D x &\leq& \tr|\nabla|\rho + 
\norm{\rho}_{\mathfrak S_p(L^2(\R^3))}\left(\sum_{j\in \N}|\mu_j|^{p'}\right)^{1/{p'}}\nonumber\\
&\leq & \tr|\nabla|\rho + \norm{\rho}_{\mathfrak S_p(L^2(\R^3))}C_p\left(\int_{\mathbb R^3}(\beta n^\alpha)^{3+{p'}}\D x\right)^{1/{p'}}.\nonumber
\end{eqnarray}
Here the second inequality is obtained from Theorem 2.1 in \cite{FrLiSe}, which states that for all $\delta>0$
$$
\sum_{j\geq1}|\mu_j|^\delta\leq C_{\delta,d}\int_{\mathbb R^d}W(x)_+^{\delta+d}\D x,
$$
where $\mu_j$ are the negative eigenvalues of the operator $|\nabla|-W$. Now with the 
choice $\alpha=(p-1)/(3p-2)$, setting $\alpha+1=q$ and 
$$
\beta=\left(\frac{\left(\int_{\mathbb R^3}n^{q}\D x \right)^{1/p}}{{2C\norm{\rho}}_{\mathfrak S_p(L^2(\R^3))}}\right)^{{p'}/3},
$$
it is easy to conclude
$$
\left(\int_{\mathbb R^3}n^{q}\D x\right)^{1/q}\leq C_p \norm{\rho^\e}_{\mathfrak S_p(L^2(\R^3))}^{\theta}\, (\tr |\nabla|\rho)^{1-\theta}.
$$
\end{proof}
\begin{remark}  The proof can easily be generalized to the $d$-dimensional case,
where one finds
$$q=\frac{d(p-1)+p}{d(p-1)+1}\, ,\qquad \theta=\frac{p}{d(p-1)+p}\, .$$
\end{remark}

Next, we state a local-in-time existence result for the Hartree system \eqref{RSE}.

\begin{lemma} \label{existence}
Let $\rho^\e_{0}$ be such that Assumption \eqref{conditionA} holds. 
Then there exists a $T>0$ and unique mild solution $\rho^\e   \in C([0,T);\mathfrak S_1(L^2(\R^3))$ to \eqref{RSE}, 
or equivalently \eqref{equationrho}, which satisfies the following conservation laws:
\begin{equation}\label{cons}
\tr \rho^\e(t) = \tr \rho^\e_{0}, \quad \tr (\rho^\e(t))^2 = \tr (\rho^\e_{0})^2.
\end{equation}
\end{lemma}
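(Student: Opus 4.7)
The plan is to prove Lemma \ref{existence} by a Banach fixed-point argument applied to the mild formulation of \eqref{equationrho}. Write $H_0^\e := \sqrt{-\e^2\Delta+1}$, a positive self-adjoint operator on $L^2(\R^3)$ with form domain the ($\e$-scaled) $H^{1/2}(\R^3)$. The key observation is that for any real-valued, sufficiently regular time-dependent potential $W(t,\cdot)$ (bounded, or at worst relatively form-bounded w.r.t.\ $H_0^\e$ with relative bound $<1$), the Hamiltonian $H_0^\e + W(t,\cdot)$ is self-adjoint on $D(H_0^\e)$ and generates a strongly continuous unitary propagator $U_W^\e(t,s)$; conjugation $\sigma\mapsto U_W^\e(t,s)\sigma U_W^\e(t,s)^*$ then preserves every Schatten norm and positivity. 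Solving \eqref{equationrho} thus reduces to finding a fixed point of the map $\Phi:\rho\mapsto U_{V_\rho}^\e(t,0)\,\rho_0^\e\, U_{V_\rho}^\e(t,0)^*$ on a suitable space of density-matrix-valued curves.

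Concretely, I would work in the complete metric space
\[
Y_T^\e := \Big\{ \rho\in C([0,T];\mathfrak S_1(L^2(\R^3))) \,:\, \rho(t)\geq 0,\ \rho(0)=\rho_0^\e,\ \sup_{t\in[0,T]}\tr H_0^\e\rho(t) \leq 2\,\tr H_0^\e\rho_0^\e \Big\},
\]
equipped with the distance $d(\rho,\sigma) = \sup_{t\in[0,T]}\|\rho(t)-\sigma(t)\|_{\mathfrak S_1}$. For $\rho \in Y_T^\e$, the operator inequalities $H_0^\e \geq \e|\nabla|$ and $H_0^\e \geq 1$ turn the bound on $\tr H_0^\e \rho$ into a bound on $\tr|\nabla|\rho$, and Lemma \ref{lemma1} (combined with $\|n_\rho\|_{L^1}=\tr\rho$) yields a uniform bound of $n_\rho$ in $L^1\cap L^q$ for some $q$ strictly larger than $1$; together with Kato's inequality $|x|^{-1}\leq(\pi/2)|\nabla|$ and Hardy--Littlewood--Sobolev, this makes $V_\rho$ a relatively form-bounded perturbation of $H_0^\e$, so that $U_{V_\rho}^\e$ is well-defined. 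That $\Phi$ maps $Y_T^\e$ into itself, for $T>0$ small, then follows from unitarity---positivity and the identity $\tr\Phi(\rho)(t)=\tr\rho_0^\e$ are automatic---together with a Gronwall estimate on $\tr H_0^\e\Phi(\rho)(t)$ obtained by differentiating in $t$ and controlling the commutator $(i/\e)\tr[V_\rho,H_0^\e]\Phi(\rho)(t)$ at the form level. For the contraction estimate, I would bound $V_{\rho_1}-V_{\rho_2}$ in a suitable norm via $\|n_{\rho_1}-n_{\rho_2}\|_{L^1\cap L^q}\lesssim \|\rho_1-\rho_2\|_{\mathfrak S_1}$ (using the $\tr H_0^\e$ bound to handle the $L^q$ part) and then apply Duhamel to compare the propagators generated by $V_{\rho_1}$ and $V_{\rho_2}$, producing a factor of $T$ that gives contraction for $T$ small.

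The conservation laws \eqref{cons} are then immediate: the representation $\rho^\e(t)=U^\e(t,0)\rho_0^\e U^\e(t,0)^*$ with $U^\e$ unitary makes every Schatten $p$-norm (in particular $\tr$ and $\tr(\cdot)^2$) invariant in time. The main technical obstacle I expect is ensuring the admissibility of $V_\rho$ as a perturbation of $H_0^\e$: because the relativistic kinetic energy is weaker than its non-relativistic counterpart, Lemma \ref{lemma1} only produces $L^q$-control of $n_\rho$ with $q<4/3$, so $V_\rho$ cannot in general be placed in $L^\infty$, and one is forced to work at the form level (using Kato-type inequalities) both when realizing $H_0^\e+V_\rho$ as a self-adjoint generator of a unitary flow and when estimating the commutator entering the Gronwall step.
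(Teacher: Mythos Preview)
The paper does not give an independent proof of this lemma: it simply cites \cite{FrLen} (and implicitly \cite{Len}) for both the local existence and the conservation laws. Your fixed-point scheme is precisely the machinery underlying those references, transplanted from the finite-rank/wave-function setting to the density-matrix setting, so your approach and the paper's are essentially the same, only you are spelling out what the paper outsources.

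One point in your sketch deserves care. In the contraction step you propose to control $V_{\rho_1}-V_{\rho_2}$ through $\|n_{\rho_1}-n_{\rho_2}\|_{L^1\cap L^q}$, invoking the $\tr H_0^\e$ bound ``to handle the $L^q$ part''. But Lemma~\ref{lemma1} is stated and proved for \emph{positive} $\rho$ and is nonlinear in $\rho$, so it does not directly give a Lipschitz bound on $n_\rho$ in $L^q$; the difference $\rho_1-\rho_2$ need not be positive, and the $Y_T^\e$ energy bound on each $\rho_j$ does not transfer to their difference. The cleaner route---and the one actually taken in \cite{Len,FrLen}---is to bypass $L^q$ control of the density altogether and use Kato's inequality $|x|^{-1}\le\tfrac{\pi}{2}|\nabla|\le\tfrac{\pi}{2\e}H_0^\e$ to obtain directly
\[
\big|\langle\psi,(V_{\rho_1}-V_{\rho_2})\phi\rangle\big|
\;\le\; C\,\|n_{\rho_1}-n_{\rho_2}\|_{L^1}\,\|(H_0^\e)^{1/4}\psi\|_{L^2}\,\|(H_0^\e)^{1/4}\phi\|_{L^2},
\]
with $\|n_{\rho_1}-n_{\rho_2}\|_{L^1}\le\|\rho_1-\rho_2\|_{\mathfrak S_1}$. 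This makes the nonlinearity locally Lipschitz as a map into form-bounded perturbations of $H_0^\e$, and the Duhamel/contraction argument then closes in an energy-weighted trace norm such as $\|(H_0^\e)^{1/4}\rho\,(H_0^\e)^{1/4}\|_{\mathfrak S_1}$ rather than in the bare $\mathfrak S_1$ norm. You already flag that one must ``work at the form level'', so this is a refinement of your outline rather than a different strategy.
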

\begin{proof} The existence of a unique solution can be obtained in a straightforward way 
from the results given in \cite{FrLen} (where the case of density matrices with finite rank is treated). 
Also, the conservation laws \eqref{cons} are established there. 
\end{proof}
Note that the conservation laws \eqref{cons} together with Assumption \eqref{conditionA} 
directly imply that, for all $\e\in (0, \e_0]$, it holds 
$${\|n^\e(t)\|}_{L^1(\R^3)}={\|n^\e(0)\|}_{L^1(\R^3)}<\infty,$$ 
and 
$${\|f^\e(t)\|}_{L^2(\R_x^3\times \R^3_\xi)}= {\|f^\e(0)\|}_{L^2(\R_x^3\times \R^3_\xi)} < \infty, $$
for all $t\in[0,T)$. 

\begin{lemma}\label{globalexistence} The solution stated above exists globally in-time, i.e. 
$T= \infty$, and we additionally have conservation of energy: 
$\mathcal E^\e_{\rm H}(t)=\mathcal E^\e_{\rm H}(0)$, provided that one of the following conditions is satisfied:
\begin{itemize}
\item[\emph{i)}] $\kappa=1$,
\item[\emph{ii)}]$\kappa=-1$ and Assumption \eqref{conditionB} is satisfied.
\end{itemize}
In particular we have the following uniform bound
\begin{equation}\label{bound}
\mathcal E^\e_{\rm kin}(t) \leq C,\quad \forall \, t\in [0,\infty).
\end{equation}
\end{lemma}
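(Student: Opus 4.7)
My plan is to combine energy conservation with an a priori control of the potential energy in terms of the relativistic kinetic energy, and then to iterate the local existence result of Lemma~\ref{existence} to obtain a global-in-time solution.

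First I would establish energy conservation on the local interval $[0,T)$. Since the Hartree system is standard in this respect, I would approximate $\rho^\e_0$ by finite-rank density matrices (as in \cite{FrLen}), verify the identity $\frac{\D}{\D t}\mathcal{E}^\e_{\rm H}(t)=0$ by a direct commutator computation (the prefactor $1/2$ in the potential energy is exactly what compensates the self-consistent dependence of $V^\e$ on $\rho^\e$), and pass to the limit using continuous dependence. The repulsive case $\kappa=1$ then concludes immediately: from $\mathcal{E}^\e_{\rm pot}(t)\geq 0$ one obtains $\mathcal{E}^\e_{\rm kin}(t)\leq \mathcal{E}^\e_{\rm H}(0)$, and the right-hand side is uniformly bounded in $\e$ by Assumption~\eqref{conditionA}.

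The heart of the proof is the attractive case $\kappa=-1$, where $\mathcal{E}^\e_{\rm pot}\leq 0$ must be absorbed into $\mathcal{E}^\e_{\rm kin}$. Here I would use the Hardy--Littlewood--Sobolev inequality to bound $|\mathcal{E}^\e_{\rm pot}|\leq C\|n^\e\|_{L^{6/5}}^2$, interpolate $L^{6/5}$ between $L^1$ and $L^{5/4}$, and finally control the $L^{5/4}$-factor by applying Lemma~\ref{lemma1} with $p=2$. Combined with the operator inequality $\sqrt{-\e^2\Delta+1}\geq \e|\nabla|$, which yields $\tr|\nabla|\rho^\e\leq \e^{-1}\mathcal{E}^\e_{\rm kin}$, and with the conservation laws \eqref{cons}, this should produce an estimate of the schematic form
\begin{equation*}
|\mathcal{E}^\e_{\rm pot}(t)|\leq C\,(\tr\rho^\e_0)^{1/3}\bigl(\e^{-3}\tr(\rho^\e_0)^2\bigr)^{1/3}\mathcal{E}^\e_{\rm kin}(t).
\end{equation*}
The $\e$-bookkeeping is what I expect to be the main obstacle: the apparently singular factor $\e^{-1}$ produced by the kinetic-energy comparison must be compensated exactly by the scaling $\tr(\rho^\e_0)^2\sim \e^3$ built into \eqref{conditionA}, while the remaining combination $(\tr\rho^\e_0)\cdot(\e^{-3}\tr(\rho^\e_0)^2)$ is precisely the quantity controlled by Assumption~\eqref{conditionB}. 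Choosing $C_*=(2C)^{-3}$ makes the prefactor at most $1/2$, so energy conservation yields $\mathcal{E}^\e_{\rm kin}(t)\leq 2\mathcal{E}^\e_{\rm H}(0)\leq C$ uniformly in $\e$ and $t$.

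Once the bound \eqref{bound} is available on $[0,T)$, global existence follows by a standard continuation argument: since the constants in Assumptions~\eqref{conditionA} and \eqref{conditionB} remain intact at every later time thanks to the conservation of $\tr\rho^\e$ and $\tr(\rho^\e)^2$, Lemma~\ref{existence} can be reapplied repeatedly and the local solution extends to all of $[0,\infty)$.
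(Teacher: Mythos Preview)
Your proposal is correct and follows essentially the same route as the paper: Hardy--Littlewood--Sobolev to bound the potential energy by $\|n^\e\|_{L^{6/5}}^2$, interpolation between $L^1$ and $L^{5/4}$, Lemma~\ref{lemma1} with $p=2$, and the operator inequality $\e|\nabla|\leq\sqrt{-\e^2\Delta+1}$, yielding $|\mathcal E^\e_{\rm pot}(t)|\leq \tilde C\,\mathcal E^\e_{\rm kin}(t)$ with an $\e$-independent $\tilde C$ that Assumption~\eqref{conditionB} forces below~$1$. One small point: in the repulsive case you claim $\mathcal E^\e_{\rm H}(0)$ is bounded ``by Assumption~\eqref{conditionA}'', but \eqref{conditionA} only bounds $\mathcal E^\e_{\rm kin}(0)$ directly; the uniform bound on $\mathcal E^\e_{\rm pot}(0)$ comes from the very estimate you derive for the attractive case (which is sign-independent), so you should invoke it there as well---the paper makes this explicit.
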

\begin{proof}
Formally, the conservation of energy can be obtained as in \cite{FrLen}.
In order to obtain the uniform a-priori bound \eqref{bound} we shall use 
\eqref{L-T}, with $p = 2$, which yields 
\begin{align*}
{\|n^\e (t) \|}_{L^{5/4}}\leq & \, C_2 {\|\rho^\e (t,\cdot, \cdot)\|}_{L^{2}}^{2/5}
(\tr |\nabla|\rho^\e(t) )^{3/5} \\
\leq & \, C_2\, \e^{-3/5}{\|\rho^\e (0,\cdot, \cdot)\|}_{L^{2}}^{2/5}
\left(\tr \sqrt{-\varepsilon^2\Delta+1}\ \rho^\e(t) \right)^{3/5},
\end{align*}
due to the second conservation law in \eqref{cons} and the fact that 
$\e|\nabla| \leq \sqrt{-\varepsilon^2\Delta+1}$, as can be seen on the level of the Fourier-symbols. 
On the other hand, recalling the Sobolev inequality 
$$
\iint_{\mathbb R^3 \times \R^3} \frac{n^\varepsilon(t,x) n^\varepsilon(t,y)}{|x-y|} \, \D x \, \D y \leq C_{\rm s}{\|n^\e (t) \|}^2_{L^{6/5}},
$$
and interpolating the density via 
${\|n^\e (t) \|}_{L^{6/5} }\leq \|n^\varepsilon(t)\|_{L^1}^{1/6} \, \|n^\varepsilon(t)\|_{L^{5/4}}^{5/6} $, 
we obtain 
$$
\iint_{\mathbb R^3 \times \R^3} \frac{n^\varepsilon(t,x) n^\varepsilon(t,y)}{|x-y|} \, \D x \, \D y \leq \tilde{C}{\mathcal E_{\rm kin}^\e (t)}, 
\quad \forall \, t\in[0,\infty),
$$ 
where $\tilde{C}=C_{\rm s} C_2^{5/3} {(\tr \rho_0^\e)}^{1/3}  {\| f^\e(0,\cdot,\cdot)\|}_{L^2}^{2/3}$ is $\e$-independent by 
Assumption \eqref{conditionA}. 
On the one hand, this shows that the initial energy is indeed well defined and uniformly bounded in $\e$ for both $\kappa = \pm 1$. Moreover, we immediately 
conclude the uniform boundedness of the kinetic energy in the case $\kappa=1$ and hence, global-in-time existence of solutions. In the case 
$\kappa=-1$ we only get, from energy conservation, that 
$$
{\mathcal E_{\rm H}^\e (0)}={\mathcal E_{\rm kin}^\e (t)}-\frac{1}{8\pi}
\iint_{\mathbb R^3 \times \R^3} \frac{n^\varepsilon(t,x) n^\varepsilon(t,y)}{|x-y|} \, \D x \, \D y \geq \left(1-\frac{\tilde{C}}{8\pi}\right){\mathcal E_{\rm kin}^\e (t)}.
$$
Uniform boundedness of ${\mathcal E_{\rm kin}^\e (t)}$, and hence, global-in-time existence, can therefore be concluded 
if $\tilde{C}< 8\pi$, which holds true under Assumption \ref{conditionB}.
\end{proof}

\begin{remark}\label{remexplain} 
In comparison to the proof given in \cite{Len} we needed to invoke slightly different arguments in order to obtain the uniform (w.r.t. $\e$) 
bound \eqref{bound}. The requirement $\tilde{C}< 8\pi$ directly leads to Assumption \eqref{conditionB} with $C_*= \frac{(8\pi)^{3}}{C_{\rm s}^{3}C_2^{5}}$. 
Note however, that in our mixed-state case we can no longer characterize the condition $\tilde{C}< 8\pi$ by the solution of the 
corresponding single-state ground state problem as it is done in \cite{Len}. 
\end{remark}


\section{Proof of the Main Result}\label{s:proof}

In this paper we shall use the following definition for the Fourier transform
$$
(\mathcal{F}_{\xi\rightarrow \eta}\varphi)(\eta) \equiv \hat \varphi (\eta)= \int_{\mathbb R^3}\varphi (\xi)e^{-i \xi \cdot \eta}\, \D \xi. 
$$
Moreover we denote by $\mathcal{S}$ the Schwartz space of 
rapidly decaying functions.

\begin{proof} The proof of the Theorem \ref{th1} consists of three steps:\\

{\it Step 1.} We first note that due to Assumption \eqref{conditionA} and 
the conservation laws \eqref{cons}, the Wigner function $f^\e(t)$ is uniformly bounded 
in $L^2(\R^3_x\times \R^3_\xi)$ and in $\mathcal S'(\R^3_x\times \R^3_\xi)$, where the later 
follows from the results in \cite{GeMaMaPo}. 
Thus, for every fixed $t\in [0,\infty)$, it holds (up to extraction of sub-sequences): $f^\e(t)\rightharpoonup f(t)$, as $\e \to 0$, 
in $\mathcal{S}(\mathbb{R}_x^3\times \mathbb{R}_\xi^3) - w*$ and in $L^2(\mathbb{R}_x^3\times \mathbb{R}_\xi^3)-w$. 
Moreover it has been shown in \cite{GeMaMaPo, LiPa} that the limit $f\in \mathcal{M}^+(\mathbb{R}_x^3\times \mathbb{R}_\xi^3)$, i.e. 
a positive phase-space measure.\\

{\it Step 2.} Next we shall prove the time-equicontinuity of $f^\e(t,x,\xi)$. 
To this end, we have to show that $\partial_t f^\varepsilon$ is bounded in $L^\infty((0,\tau),\mathcal S'(\R^3_x\times \R^3_\xi))$, 
for any $\tau < \infty$. 
We consequently consider the (phase space) weak formulation of the Wigner transformed evolution equation, i.e. 
\begin{equation}\label{weak}
-\left\langle\partial_t f^\varepsilon,\phi \right\rangle=  \langle\Gamma^\varepsilon f^\varepsilon,\phi\rangle + \langle \Theta^\varepsilon[V^\varepsilon]f^\varepsilon,\phi\rangle,
\end{equation}
where $\phi\in \mathcal S(\mathbb{R}_x^3\times \mathbb{R}_\xi^3)$ and we write  
$\left\langle~\cdot~,~\cdot~ \right\rangle$ for the corresponding duality bracket on $\mathcal S$. 
Above, $\Gamma^\e$ and $\Theta^\e[V^\varepsilon]$ are pseudo-differential operators, corresponding to the Wigner transformed 
kinetic and potential energy terms. Explicitly, $\Gamma^\varepsilon f^\varepsilon$ is given by
$$
\Gamma^\varepsilon f^\varepsilon(t,x,\xi)=\frac{i}{(2\pi)^3} \iint_{\mathbb{R}^3 \times \mathbb{R}^3}e^{iy\cdot(x-\eta)}\ y\cdot \gamma^\varepsilon(y,\xi)f^\varepsilon(t,\eta,\xi)\, {\rm d}y \, {\rm d}\eta,
$$
where
$$
\gamma^\varepsilon(y,\xi):=\frac{2\xi}{\sqrt{|\xi+\varepsilon\frac{y}{2}|^2+1}+\sqrt{|\xi-\varepsilon\frac{y}{2}|^2+1}} \in \R^3.
$$
On the other hand, the nonlinear potential $V^\e(t,x)$ enters via the well known operator, cf. \cite{LiPa, MaMa}
$$
\Theta^\varepsilon[V^\varepsilon] f^\varepsilon(t,x,\xi) 
=\frac{i}{(2\pi)^3}\iint_{\mathbb{R}^3\times \R^3} e^{i\eta\cdot(z-\xi)}\delta^\e(t,x,\eta) f^\varepsilon(t,x,z)\, {\rm d}\eta \, {\rm d}z.\nonumber
$$
whose symbol is given by
\begin{align*}
\delta^\e(t,x,\eta) := \frac{1}{\e} \left(V^\varepsilon(t,x+\varepsilon\frac{\eta}{2}) - V^\varepsilon(t,x-\varepsilon\frac{\eta}{2})\right) = \eta\cdot\int_{-1/2}^{1/2}\nabla_xV^\varepsilon(t,x+\varepsilon s\eta)\, \D s.
\end{align*}
In order to estimate the first term on the r.h.s. of \eqref{weak}, we use an inverse Fourier transform w.r.t. $x$ 
and Plancherel's theorem to obtain
\begin{align*}
|\langle \Gamma^\varepsilon f^\varepsilon,\phi\rangle| 
 & = \Big|\iint_{\mathbb{R}^6}\Big( \int_{\mathbb{R}^3}e^{-iy\cdot\eta}(\mathcal F^{-1}_{x\rightarrow y}{\nabla_x\phi})(y,\xi) \cdot 
 \gamma^\varepsilon(\xi,y)\, {\rm d}y \Big) f^\varepsilon(t,\eta,\xi){\rm d}\eta \, {\rm d}\xi\Big|\\
& \leq {\|f^\varepsilon(t)\|}_{L^2}\Big(\iint_{\mathbb{R}^6}\left| (\mathcal F^{-1}_{x\rightarrow y}{\nabla_x\phi})(y,\xi) \right|^2 \left|\gamma^\varepsilon(\xi,y)\right|^2 {\rm d}y \, {\rm d}\xi\Big)^{1/2}\\
&
\leq{\|f^\varepsilon(t)\|}_{L^2}{\|\nabla_x \phi \|}_{L^2},
\end{align*}
where the second inequality follows from the fact that 
$|\gamma^\e|\leq1$, for all $\e \in (0,\e_0]$. Since $f^\e(t,x,\xi)$ is uniformly 
bounded in $L^2(\R^3_x\times \R^3_\xi)$ for all $t\in [0,\infty)$, due to \eqref{cons} and Assumption \eqref{conditionA}, 
we obtain the desired uniform bound for $|\langle \Gamma^\varepsilon f^\varepsilon,\phi\rangle|$.
Next, we consider the nonlinear term on the r.h.s. of \eqref{weak}. 
After a Fourier transform w.r.t. $\xi$, we need to estimate
\begin{align*} 
|\langle\Theta^\varepsilon[V^\varepsilon] f^\varepsilon,\phi\rangle| =  
\Big|\iint_{\mathbb{R}^6}\hat{\phi}(x,\eta)\Big(\int_{-1/2}^{1/2}\eta\cdot \nabla_x V^\varepsilon(t,x+\varepsilon s\eta){\rm d}s\Big)
(\mathcal F^{-1}_{x\rightarrow y}{f}^\varepsilon)(t,x,\eta){\rm d}x \, {\rm d}\eta\Big |.
\end{align*}
Using the generalized Young inequality we have
$$
{\|\nabla_xV^\varepsilon(t)\|}_{L^2(\mathbb R^3)}\leq C {\Big \|\nabla_x {|x|^{-1}} \Big \|}_{L^{3/2}_w(\mathbb R^3)} 
{\|n^\varepsilon(t)\|}_{L^{6/5}(\mathbb R^3)},
$$
and by interpolation between $L^1(\R^3)$ and $L^{5/4}(\R^3)$, one obtains 
\begin{equation*}
\|n^\varepsilon(t)\|_{L^{6/5}}\leq \|n^\varepsilon(t)\|_{L^{1}}^{1/6}\, \|n^\varepsilon(t)\|_{L^{5/4}}^{5/6} \leq 
\tr(\rho^\varepsilon(t))^{1/6} \, \|f(t,\cdot, \cdot)^\varepsilon\|_{L^{2}}^{1/3} 
(\mathcal E_{\rm kin}^\e(t))^{1/2}.
\end{equation*}
Due to the conservation laws \eqref{cons} and the uniform bound on the kinetic energy \eqref{bound} we therefore find 
that $\|\nabla_xV^\varepsilon(t)\|_{L^2}$ is uniformly bounded as $\varepsilon\rightarrow 0$, for all $t\in [0,\infty)$. 
Thus, we can estimate
\begin{align*}
|\langle \Theta^\varepsilon[V^\varepsilon] f^\varepsilon,\phi\rangle| & \leq   
{\|f^\varepsilon(t)\|}_{L^2}\Big(\iint_{\mathbb{R}^6}|\hat{\phi}(x,\eta)|^2|\eta|^2\Big|\int_{-1/2}^{1/2}\nabla_xV^\varepsilon(t,x+\varepsilon s\eta){\rm d}s\Big|^2 {\rm d}x \, {\rm d}\eta\Big)^{1/2}\\
&\leq 
{\|f^\varepsilon(t)\|}_{L^2} {\|\nabla_x V^\varepsilon(t)\|}_{L^2}
\Big(\int_{\mathbb{R}^3}\sup_{y\in \R^3}|\hat{\phi}(y-\varepsilon s\eta,\eta)|^2|\eta|^2 {\rm d}\eta\Big )^{1/2}\\
&\leq
{\|f^\varepsilon(t)\|}_{L^2}\, {\|\nabla_xV^\varepsilon(t)\|}_{L^2}\, \| \, {\sup_{y\in \R^3} |\hat{\phi}(y,\eta)||\eta|}\,\|_{L^2} < \infty,
\end{align*}
uniformly in $\e$. Thus, $\partial_t f^\e$ is bounded in $L^\infty((0,\tau), \mathcal S'(\mathbb{R}_x^3\times \mathbb{R}_\xi^3))$, uniformly in $\e$, and 
we conclude that (again up to extraction of sub-sequences)
\begin{eqnarray}
f^\varepsilon \stackrel{\e\rightarrow 0 }{\longrightarrow} f \quad &\textnormal{in}& \quad 
C((0,\tau);\mathcal S'(\mathbb{R}_x^3\times \mathbb{R}_\xi^3) - w*),\nonumber\\
\textnormal{and}  &\textnormal{in}& \quad L^\infty((0,\tau);L^2(\mathbb{R}_x^3\times \mathbb{R}_\xi^3))-w*.\nonumber
\end{eqnarray}
Together with the results of {\it Step 1} this implies 
$$
f \in C((0,\tau);\mathcal M^+ (\mathbb{R}_x^3\times \mathbb{R}_\xi^3) - w*)\cap L^\infty((0,\tau);L^1\cap L^2(\mathbb{R}_x^3\times \mathbb{R}_\xi^3) ).$$ 

{\it Step 3.} It remains to identify the limiting $f(t,x,\xi)$, by passing to the limit in the (full) 
weak formulation of the Wigner transformed evolution equation, i.e.
\begin{equation}\label{weakevol}
\begin{split}
& \int_0^\infty   \iint_{\mathbb R^6} \big(-\partial_t \sigma (t) \phi(x,\xi) - \sigma(t)\Gamma^\e  \phi(x,\xi) \big) 
f^\e(t,x,\xi) \, \D x \, \D  \xi \, \D t\\
&  + \int_0^\infty   \iint_{\mathbb R^6} \sigma(t) \Theta^\e[V^\e]  \phi (x,\xi)  f^\e(t,x,\xi) \, \D x \, \D  \xi \, \D t = 
 \iint_{\mathbb R^6} \sigma(0) \phi(x,\xi) f_0^\e(x,\xi) \, \D x \, \D \xi , 
\end{split}
\end{equation}
where $\sigma\in C_0^\infty(\mathbb R^+_t)$ and $\phi\in\mathcal S(\mathbb{R}_x^3\times \mathbb{R}_\xi^3)$. First it is easily seen that 
$$
\lim_{\varepsilon \rightarrow 0}\sup_{t\in [0,\tau]}|\langle f^\varepsilon,
\Gamma^\varepsilon \phi\rangle-\langle f,\nabla_x\phi\cdot \frac{\xi}{\sqrt{|\xi|^2+1}}\rangle|=0,
$$
since 
$f^\varepsilon \rightharpoonup f$ in $L^\infty((0,\tau];L^2(\mathbb{R}_x^3\times \mathbb{R}_\xi^3))-w*$, as shown in {\it Step 2}, and the convergence of the rest of the duality bracket is strong in ${C([0,T];L^2(\mathbb{R}_x^3\times \mathbb{R}_\xi^3))}$. Concerning the nonlinear term, we have to show 
\begin{equation}\label{nlincon}
\lim_{\varepsilon \rightarrow 0}\sup_{t\in [0,\tau]}|\langle f^\varepsilon, \Theta^\varepsilon[V^\varepsilon] \phi\rangle-\langle  f, \nabla_xV \cdot \nabla_\xi\phi\rangle|=0.
\end{equation}
Similarly to what is done above, after invoking a Fourier transform w.r.t. $\xi$, it is sufficient to prove that
\begin{align*}
\lim_{\varepsilon\rightarrow 0}\Big(\hat{\phi}(x,\eta) \int_{-1/2}^{1/2}\eta\cdot \nabla_x V^\varepsilon(t,x+\varepsilon s\eta){\rm d}s\Big) = 
 \hat{\phi}(x,\eta) \eta\cdot\nabla_xV(t,x),
\end{align*}
in ${C([0,T];L^2(\mathbb{R}_x^3\times \mathbb{R}_\xi^3))}$ strongly. 
In order to do so we 
denote the compact support of $\hat{\phi}$ by $\Omega:\subseteq \Omega_x\times \Omega_\eta$ and write
\begin{align*}
&\int_{-1/2}^{1/2}  \Big(
\iint_\Omega|\hat{\phi}(x,\eta) |\eta||^2|\nabla_xV^\e(t,x+\varepsilon s\eta)
 - \nabla_xV(t,x)|^2 {\rm d}x \, {\rm d}\eta \Big )^{1/2}
{\rm d}s \\
&\leq \sup_{x,\eta\in\Omega}|\hat{\phi}(x,\eta) |\eta||\int_{-1/2}^{1/2} \Big (\iint_\Omega|\nabla_xV^\varepsilon(t,x+\varepsilon s\eta)
 - \nabla_xV(t,x+\varepsilon s\eta)|^2 {\rm d}x \, {\rm d}\eta \Big. \\
 & \qquad\qquad\qquad\qquad\qquad \Big.+ 
\iint_\Omega|\nabla_xV(t,x+\varepsilon s\eta)
 - \nabla_xV(t,x)|^2 {\rm d}x \, {\rm d}\eta ~\Big)^{1/2}
{\rm d}s \\
&=: \sup_{x,\eta\in\Omega}|\hat{\phi}(x,\eta) |\eta||\int_{-1/2}^{1/2}( I^\varepsilon  + J^\varepsilon )^{1/2} \, {\rm d}s.
\end{align*}
First, we consider the term $J^\varepsilon $: Using Plancherel's Theorem we can write
\begin{eqnarray}
J^\varepsilon &=&
\iint_\Omega |\nabla_x V(t,x+\varepsilon s\eta)
 - \nabla_x V(t,x)|^2 {\rm d}x \, {\rm d}\eta
\nonumber\\
&=&
\iint_\Omega \left|(\mathcal F_{x\rightarrow z}\nabla_x V)(t,z)\right|^2| \, e^{i\varepsilon s\eta\cdot z} -1 |^2{\rm d}x \, {\rm d} \eta
,\nonumber
\end{eqnarray}
from which we conclude that $J^\varepsilon\rightarrow 0$, as $\e \to 0$, by dominant convergence. 
In order to treat the term $I^\e$ we take into account that $V^\varepsilon$ solves the Poisson equation
\begin{equation*}\label{poisson}
- \kappa \Delta  V^\e =  \, n^\varepsilon(t,x), \quad \kappa = \pm 1
\end{equation*}
with $n^\varepsilon(t)\in L^{5/4}(\mathbb R^3)\cap L^1(\mathbb R^3)$, uniformly in $\e$. Due to the 
regularizing property of the Poisson equation we have $\nabla_xV^\varepsilon\in W_{\rm loc}^{1,5/4}({\mathbb R}^3)$ uniformly in $\e$, 
which is compactly embedded in $L^2_{\rm loc}({\mathbb R}^3)$. We therefore infer that, for all $t \in [0,\tau]$, it holds
\begin{equation}\label{gradientV}
\nabla_xV^\varepsilon(t)
 \stackrel{\e\rightarrow 0 }{\longrightarrow} \nabla_xV(t) \quad\textnormal{in}\quad L^2(\mathcal O),
\end{equation}
where $\mathcal O=\{z:=x+\e s\eta\in\mathbb R^3|\ (x,\eta)\in(\Omega_x^3\times\Omega_\eta^3)\}$. We finally 
recall the result of Theorem III.2 in \cite{LiPa} 
which (up to extraction of subsequences) states that, as $\e \to 0$,  
$$n^\varepsilon(t,x)\rightharpoonup n(t,x)=\int_{\mathbb{R}^3}f(t,x,\xi){\rm d}\xi, $$ 
in $C([0,T];\mathcal M^+(\mathbb{R}_x^3)-w*)$.
We therefore conclude $I^\varepsilon\rightarrow 0$, as $\e \to 0$, which consequently implies \eqref{nlincon}. 
In summary we have shown that the Wigner transformed evolution equation \eqref{weakevol}
converges weakly to 
\begin{equation*}
\begin{split}
 & \int_0^\infty   \iint_{\mathbb R^6} \Big(-\partial_t \sigma (t) \phi(x,\xi) - \frac{\xi}{\sqrt{|\xi|^2+1}} \cdot \nabla_x 
 \phi(x,\xi) \sigma(t) \Big) f(t,x,\xi) \, \D x \, \D  \xi \, \D t \\
& \, + \int_0^\infty \iint_{\mathbb R^6} \sigma(t) \nabla_x V \cdot \nabla_\xi  \phi (x,\xi)   f(t,x,\xi) \, \D x \, \D  \xi \, \D t 
 = \iint_{\mathbb R^6} \sigma(0) \phi(x,\xi) f_0(x,\xi) \, \D x \, \D \xi  , 
\end{split}
\end{equation*}
i.e. the relativistic Vlasov-Poisson system (in distributional sense). This finishes the proof. \end{proof}

\begin{remark} It is straightforward to generalize our result for \emph{Yukawa-type interactions}, i.e. 
$$
V^\varepsilon(t,x)= \kappa \, \frac{ e^{-\lambda |x|}}{4\pi |x|}*n^\varepsilon(t,x),\quad \lambda > 0,
$$
which are used in, e.g. \cite{Len}. Indeed it is possible to prove our theorem for an even more general class of 
self-consistent interaction potentials $V^\varepsilon=G* n^\varepsilon$, if the kernel $G=G(x)$ satisfies appropriate 
regularity conditions, cf. \cite{Bo, LiPa} for a discussion in the case of non-relativistic Hartree systems. 
Finally, we remark that one can also treat the case of \emph{semi-relativistic Hartree-Fock systems}, cf. \cite{FrLen}, 
by combining straightforwardly the results present here with those given in \cite{GIMS} for non-relativistic Hartree-Fock systems. 
To this end, one should note that the so-called \emph{exchange-term} vanishes as $\e \to 0$. Hence, 
one again recovers the relativistic Vlasov-Poisson system \eqref{VP} in the limit.
\end{remark}


\bibliographystyle{amsplain}

\end{document}